\newtheorem{Thm}{Theorem}[section]
\newtheorem{Propn}[Thm]{Proposition}
\renewcommand{\th}{%
    \ifmmode% math mode
        ^\mathrm{th}%
    \else%
        \textsuperscript{th}\xspace%
    \fi%
}
\newcommand{\ost}{%
    \ifmmode% math mode
        ^\mathrm{st}%
    \else%
        \textsuperscript{st}\xspace%
    \fi%
}
\newcommand{\rd}{%
    \ifmmode% math mode
        ^\mathrm{rd}%
    \else%
        \textsuperscript{rd}\xspace%
    \fi%
}
\newcommand{\nd}{%
    \ifmmode% math mode
        ^\mathrm{nd}%
    \else%
        \textsuperscript{nd}\xspace%
    \fi%
}
\newcommand{\setcst}[1]{{\cl W}({#1})}
\newcommand{\spset}[1]{\Sigma^n_{#1}}
\newcommand{\lrset}[1]{\cl C_{#1}^{n_1\times n_2}}
\newcommand{\Amap}{{\sf A}}
\newcommand{\Dmap}{{\sf D}}
\newcommand{\scp}[2]{\langle #1,\, #2 \rangle}
\newcommand{\Rbb}{\mathbb{R}}
\newcommand{\inv}[1]{\frac{1}{#1}}
\newcommand{\tinv}[1]{{\textstyle\frac{1}{#1}}}
\renewcommand{\leq}{\leqslant}
\renewcommand{\geq}{\geqslant}
\DeclareMathOperator{\ve}{vec}
\newcommand{\bs}{\boldsymbol}
\newcommand{\bb}{\mathbb}
\newcommand{\cl}{\mathcal}
\newcommand{\ts}{\textstyle}
\newcommand{\ie}{\emph{i.e.}, }
\newcommand{\eg}{\emph{e.g.}, }
\newcommand{\iid}{%
    \ifmmode% math mode
        \mathrm{i.i.d.}%
    \else%
        i.i.d.\@\xspace%
    \fi%
}
\newcommand{\rv}{\mbox{r.v.\@}\xspace}
\newcommand{\whp}{\mbox{w.h.p.\@}\xspace}
\newcommand{\st}{\mbox{s.t.\@}\xspace}
\newcommand{\myparagraph}[1]
{\ \\[-3mm]
\noindent\textbf{#1}\ }
\title{Taking the edge off quantization:\\
projected back projection in dithered compressive sensing\vspace{-2mm}}
\author{Chunlei Xu\thanks{Email: {\{chunlei.xu,vincent.schellekens,laurent.jacques\}@uclouvain.be}. 
    The authors are funded by the Belgian F.R.S.-FNRS. Part of this study is funded by the project {\sc AlterSense} (MIS-FNRS).},\ Vincent Schellekens$^*$ and Laurent Jacques$^*$\\[2mm]
  \small ISPGroup, ICTEAM/ELEN, UCLouvain, Belgium.}
\begin{document}

\maketitle
\begin{abstract}
Quantized compressive sensing (QCS) deals with the problem of representing compressive signal measurements with finite precision representation, \ie a mandatory process in any practical sensor design. To characterize the signal reconstruction quality in this framework, most of the existing theoretical analyses lie heavily on the quantization of sub-Gaussian random projections (\eg Gaussian or Bernoulli). We show here that a simple uniform scalar quantizer is compatible with a large class of random sensing matrices known to respect, with high probability, the restricted isometry property (RIP). Critically, this compatibility arises from the addition of a uniform random vector, or \emph{dithering}, to the linear signal observations before quantization. In this setting, we prove the existence of (at least) one signal reconstruction method, \ie the projected back projection (PBP), whose reconstruction error decays when the number of quantized measurements increases. This holds with high probability in the estimation of sparse signals and low-rank matrices. We validate numerically the predicted error decay as the number of measurements increases.\ \\[3mm]  

\emph{Keywords:} Quantized compressive sensing, scalar uniform quantization, uniform dithering, projected back projection
\end{abstract}

\section{Introduction}
\label{sec:intro}

To release the burden of high-dimensional signal sampling combined with
post-processing compression methods, Compressive Sensing (CS) theory
\cite{CT2005,FR2013} has emerged as a new procedure to compressively and
non-adaptively sample low-complexity signals, \eg sparse in a certain
basis or following a low-rank model. 

Specifically, CS shows how to recover a signal $\bs x$ that (approximately)
belongs to a low-complexity set $\cl K \subset \bb R^n$ from its
compressive measurement vector $\bs y = \bs \Phi \bs x + \bs n \in \bb
R^m$, where $\bs y$ is acquired from a sensing (or measurement) matrix $\bs \Phi \in \bb
R^{m\times n}$ (with $m< n$) with an additive noise $\bs n \in \bb R^m$. 
Many non-linear reconstruction algorithms then attain a stable and robust estimation of $\bs x \in \cl K$ from $\bs y$ by leveraging the low-complexity signal model 
(\eg $\ell_1$-norm minimization, greedy algorithms
\cite{CT2005,TG2007,FR2013}). The accuracy of this estimate can be ensured if $\inv{\sqrt{m}}\bs
\Phi$ respects the Restricted Isometry Property (RIP) over $\cl
K$, \ie $\|\inv{\sqrt{m}}\bs
\Phi \bs u\| \approx \|\bs u\| := (\sum_i |u_i|^2){}^{1/2}$ for all $\bs u \in \cl K$, up to a (multiplicative) distortion decreasing when $m$ increases. 
Since the advent of CS, numerous random matrix constructions (\eg the unstructured
sub-Gaussian random matrices or the structured random partial Fourier matrix) have been discovered to respect the RIP with high probability (\whp)~\cite{BDDW08,MPT2008,Rau10,FR2013}. 

As a matter of fact, actual acquisition systems cannot obtain
infinite precision on the recorded data. Signal observations must be digitized for transmission purposes,
storage or further specific processing.  Therefore, a more realistic Quantized CS
(QCS) model lies in estimating a low-complexity signal $\bs x\in
\cl K$ from $\bs y = \cl Q^{\rm g}(\bs \Phi \bs x)$,
where $\cl Q^{\rm g}: \bs u \in \bb R^m \mapsto \cl Q^{\rm g} (\bs u) \in \cl A \subset \bb R^m$ is a general \emph{quantization} function  (or \emph{quantizer}) mapping $m$-dimensional vectors to vectors in a discrete set (or \emph{codebook}) $\cl A$. Many quantizers have been studied in QCS, \eg $\Sigma\Delta$-quantization~\cite{gunturk2013sobolev,HS2018QCSStruMatrix}, non-regular scalar quantizers~\cite{B_TIT_12}, 
non-regular \emph{binned}
quantization~\cite{pai_nonadapt_MIT06,kamilov_2012}, and vector
quantization by frame permutation~\cite{vivekQuantFrame}. They aim
at easing the impact of the quantizer on the estimation of $\bs x$ from $\bs y$, by some well-designed algorithms achieving fast (\eg polynomial or exponential) reconstruction error decay when $m$ increases~\cite{gunturk2013sobolev,BFNPW2017,CSQuanMeasure2010}. These works are mostly dominated by the use of sub-Gaussian random matrices.
Only two recent studies escape from this domination:~\cite{DJR2017} uses partial circulant ensembles with Gaussian random entries in 1-bit CS, and~\cite{HS2018QCSStruMatrix} leverages fast Johnson-Lindenstrauss embeddings based on bounded orthonormal systems (BOS) and partial circulant ensembles (PCE) with noise-shaping quantization (\eg $\Sigma\Delta$). 

\myparagraph{Contributions:} The standpoint of this work is to show that a simple, non-optimal scalar quantization procedure, \ie a uniform quantizer $\cl Q(\cdot):=\delta \lfloor\frac{\cdot}{\delta} \rfloor$ (with $\lfloor \cdot \rfloor$ the floor function) of resolution $\delta>0$, applied componentwise onto vectors (or entry-wise on matrices), is \emph{compatible with a large class of sensing matrices known to satisfy the RIP}. This includes not only the unstructured sub-Gaussian random constructions, but also structured sensing matrices such as random partial Fourier/DCT matrices, BOS or PCE random constructions~\cite{FR2013}. This compatibility arises \emph{iff}\footnote{Actually, without dithering, there exist signals that cannot be estimated in QCS with Bernoulli sensing (see \eg~\cite[Sec. 5]{J2015}).} a random, uniform \emph{dithering} $\bs \xi\in\bb R^m$, with $\xi_i \sim_{\iid} \cl U([0,\delta])$, is added to the quantizer input~\cite{B_TIT_12,J2015,JC2016}, yielding the new sensing model:\vspace{-1mm}
\begin{equation}
\label{eq:Uniform-dithered-quantization}
\ts \bs y = \Amap(\bs x) = \Amap(\bs x; \bs\Phi, \bs\xi) :=\cl Q(\bs \Phi\bs x+\bs \xi) \in \delta \bb Z^m.\vspace{-1mm}
\end{equation}
Surprisingly, the announced compatibility between the QCS model \eqref{eq:Uniform-dithered-quantization} and the class of RIP matrices is actualized by a simple yet effective reconstruction method, the projected back projection (PBP) of the quantized measurements onto the set $\cl K$. This amounts to finding the closest point to the back projection (BP) $\inv{m}\bs \Phi^T\bs y$ in $\cl K$. 

Moreover, given a fixed sensing matrix satisfying the RIP, we show that PBP achieves \whp on the draw of the dithering good reconstruction performances in two cases: for the \emph{uniform estimation} of all signals in $\cl K$ given one draw of a random dithering, and for the \emph{non-uniform estimation} of one single signal with a dithering generated conditionally to this signal.

\myparagraph{Prerequisites and Assumptions:} To derive our results, we first assume that the signal set $\cl K$ is a \emph{structured low-complexity} (SLC) set $\cl K\subset \bb R^n$. Mathematically, this means that \emph{(i)} $\cl K \ni \bs 0$, \emph{(ii)} $\cl K$ is a \emph{cone}, \ie $\lambda \cl K \subset \cl K$ for all $\lambda > 0$, and \emph{(iii)} the Kolmogorov entropy of $\cl K \cap\bb B^n$ is bounded as\vspace{-2mm} 
\begin{equation}
  \label{eq:bound-kolm-ent}
  \ts \cl H(\cl K \cap \bb B^n, \eta) \leq \setcst{\cl K} \log(1 + 1/\eta),\vspace{-2mm}  
\end{equation}
where $\exp(\cl H(\cl S, \eta))$ is the smallest number of translated $\ell_2$-balls of radius $\eta>0$ that can \emph{cover} $\cl S\subset \bb R^n$, and $\setcst{\cl K}>0$ only depends on the geometry of $\cl K$. Good examples of SLC sets are $\Sigma^n_k$, the set of $k$-sparse signals in $\Rbb^n$ (with\footnote{Henceforth, the symbols $C,C',C'', \cdots, c,c',c'', \cdots > 0$ are positive and universal constants whose values can change from one line to the other.} $\setcst{\cl K} \leq C k \log(n/k)$), and $\lrset{r}$, the set of rank-$r$ $(n_1\!\times\!n_2)$-matrices (with $n_1n_2 = n$ and $\setcst{\cl K} \leq C r (n_1 + n_2)$). Note that $\setcst{\cl K}$ and the square Gaussian mean width (SGMW) of ${\cl K \cap \bb B^n}$, \ie another measure of a set dimension~\cite{KM2005,LM2013}, are not equivalent but often share the same bounds (see, \eg~\cite{oymak2015near,JC2016} for more examples). 

Additionally, for the analysis of the decay rate of PBP in Sec.~\ref{sec:lpd-A}, we consider that $\tinv{\sqrt m}\bs \Phi$ is generated by a \emph{random embedding of low-complexity set} (RELS) construction such that, given a distortion $\epsilon>0$, a failure probability $0<\zeta<1$, and the constant $\setcst{\cl K}>0$ defined above, if
$m \geq C \epsilon^{-2} \setcst{\cl K}\,\cl P_{\log}(m, n, 1/\zeta)$, where $\cl P_{\log}$ is some low-degree polynomial of logarithms in its arguments, then $\tinv{\sqrt m}\bs\Phi$ respects the RIP$(\cl K, \epsilon)$, \ie\vspace{-2mm}  
\begin{equation}
\label{eq:RIP-K-B}
\ts |\frac{1}{m}\|\bs \Phi\bs u\|^2-\|\bs u\|^2|\leq\epsilon,~\forall \bs u\in \cl K\cap\bb B^n,\vspace{-1mm}
\end{equation} 
with the probability exceeding $1-\zeta$. 

RELS constructions actually compose the vast majority of random matrix constructions known to satisfy the RIP~\cite{FR2013}.  This is the case of sub-Gaussian random matrices or Partial Random Orthonormal Matrix (PROM) over any SLC set $\cl K$ (with $\setcst{\cl K}$ bounding the SGMW of $\cl K$)~\cite{KM2005,oymak2015sors,XJ2018}, BOS or PCE over sparse signals, or other constructions listed in~\cite{XJ2018}. For instance, a random matrix $\tinv{\sqrt m}\bs \Phi$ generated by a (discrete) BOS is RIP$(\cl K, \epsilon)$ with probability exceeding $1-\zeta$ over the set of $k$-sparse signals in an orthonormal basis $\bs \Psi\in \bb R^{n\times n}$, \ie $\cl K= \bs \Psi\spset{k}$, provided $\ts m \geq C \mu^2 \epsilon^{-2} k \,(\log k)^2 \log n \log m \log 1/\zeta$, with $\mu>0$ the coherence of the BOS with $\bs \Psi$~\cite{CT06,Rau10,FR2013}. This matches the RELS requirement on $m$, \eg with the classical bound $\setcst{\cl K} \leq C k \log(n/k)$~\cite{BDDW08}.

\myparagraph{Paper organization:} The rest of the paper is structured as follows. First, we prove in Sec.~\ref{sec:DCS}, that PBP can actually deliver good estimates for signals in a certain SLC set $\cl K$ observed by the general \textit{distorted} CS (DCS) model, \vspace{-2mm}   
\begin{equation}
  \label{eq:distorted-CS-model}
 \ts  \bs y = \Dmap(\bs x) \in \bb R^m,\quad \bs x \in \cl K \cap \bb B^n, \vspace{-1mm}
\end{equation}
where $\bb B^n$ is the unit $\ell_2$-ball. This fact is ensured when
the distorted mapping $\Dmap: \bb R^n\to \bb R^m$, which includes the
dithered quantizer $\Amap$
in~\eqref{eq:Uniform-dithered-quantization}, respects a certain
\emph{limited projection distortion} (LPD) property that somehow qualifies
how far $\Dmap$ is from a linear mapping. Next, in Sec.\ref{sec:lpd-A}, we establish that, \whp, the reconstruction error of PBP decays like $O(m^{-1/2})$, up to log factors, for the set of sparse signals and the set of low-rank matrices and in the context of quantized RELS observations. Finally, in Sec.~\ref{sec:exp}, we validate our results numerically in various experiments involving different SLC sets, sensing matrices, and under multiple sensing parameters.

\myparagraph{Related works:} Reconstruction of low-complexity signals from QCS observations has been studied in the context of 1-bit CS~\cite{BB2008,BFNPW2017,PV2013,DJR2017} and multi-bit quantization~\cite{JDV2015,OptQuanLass2016,HS2018QCSStruMatrix}. Most of these works focus on estimating such signals from their quantized or non-linearly disturbed sub-Gaussian random projections. The studies~\cite{HS2018QCSStruMatrix} and~\cite{DJR2017} are two exceptions that use, respectively, BOS and PCE constructions, and subsampled Gaussian random circulant sensing matrix. However, both works are restricted to sparse signal estimations. Variants of the LPD property defined in Sec.~\ref{sec:DCS} were introduced in~\cite[Thm 1.9]{PV2016} and in~\cite{JDV2015,PV2013} for bounding signal reconstruction error in non-linear CS and in 1-bit CS, respectively.  Adaptive or random dithering were also considered in 1-bit CS~\cite{BFNPW2017,DJR2017} and in multi-bit QCS~\cite{DJR2017}. 
Finally, by instantiating the non-linear CS models of~\cite{PVY2017,PV2016} to the QCS model~\eqref{eq:Uniform-dithered-quantization}, our results are essentially recovered in the specific case of non-uniform sparse signal estimation with quantized, dithered Gaussian random projections. In this sense, our work can thus be seen as a generalization of this context to quantized, dithered random projections of signals with RIP matrices, involving both more general low-complexity signal sets and uniform reconstruction guarantees. 

\section{PBP reconstruction error in DCS}
\label{sec:DCS}

The PBP estimate of a signal $\bs x\in \cl K$ observed by the DCS model $\bs y = \Dmap(\bs x)$ is formally defined as\vspace{-2mm}
\begin{equation}
  \label{eq:PBP-estimate}
\ts  \hat{\bs x}\ :=\ \cl P_{\cl K}(\inv{m} \bs \Phi^\top \bs y),\vspace{-2mm} 
\end{equation}
where $\inv{m} \bs \Phi^\top \bs y$ stands for the back projection of the measurement vector $\bs y$, and $\cl P_{\cl K}$ is a projector\footnote{In cases where $\min_{\bs u \in \cl K} \|\bs z - \bs u\|$ has several minimizers, \eg if $\cl K$ is non-convex, $\cl P_{\cl K}$ picks one of them arbitrarily.} on $\cl K$, \ie
$
\ts \cl P_{\cl K}(\bs z)\ \in\ \arg\min_{\bs u \in \cl K} \|\bs z - \bs u\|.
$
Throughout this work, we assume $\cl P_{\cl K}$ can be computed in polynomial time with respect to $m$ and $n$. 
For instance, if $\cl K$ is the set of $k$-sparse vectors or the set of rank-$r$ matrices, $\cl P_{\cl K}$ is the hard thresholding operator zeroing all but the $k$ greatest in absolute value components of vectors, or zeroing all but the $k$ first singular values of matrices in their SVD decomposition.

PBP can provide accurate estimate of a low-complexity signal $\bs x \in \cl K$ observed by the DCS model~\eqref{eq:distorted-CS-model} if the mapping $\Dmap$ is not too far from a RIP matrix $\tinv{\sqrt m}\bs \Phi \in \bb R^{m \times n}$. Mathematically, given a linear mapping $\bs \Phi$ and a distortion $\nu > 0$, this amounts to asking $\Dmap$ to respect the limited projected distortion (LPD) property over a set $\cl K\subset \bb R^n$ observed by $\bs \Phi$, or LPD$(\cl K, \bs \Phi, \nu)$, which reads 
\begin{equation}\label{eq:LPD}
\ts \inv{m}\,|\langle \Dmap(\bs u), \bs \Phi \bs v\rangle-\langle \bs
\Phi\bs u,\bs \Phi\bs v\rangle|\ \leq \nu,\quad \forall \bs u, \bs v \in \cl K \cap \bb B^n.
\end{equation}
This property can be \emph{localized} if $\bs u$ is fixed in~\eqref{eq:LPD}, in which case 
$\Dmap$ respects the \emph{local} LPD property on $\bs u$, or L-LPD$(\cl K, \bs \Phi, \bs u, \nu)$.

For fixed $\bs u,\bs v\in \cl K\cap \bb B^n$, the (L)LPD property bounds the scalar product between the deviation $\Dmap(\bs u)-\bs \Phi\bs u$ and undistorted compressed observations $\bs \Phi \bs v$ in the compressed domain $\bb R^m$. If the distortion is solely an additive noise, \ie $\Dmap(\bs u) = \bs \Phi \bs u + \bs \rho$, proving the L-LPD degenerates to showing that $\tinv{m}\scp{\bs \rho}{\bs \Phi \bs v}$ is small for any $\bs v\in \cl K \cap \bb B^n$ and a fixed $\bs \rho$. This is easy to prove when the components of $\bs \rho$ are \iid sub-Gaussian, which includes the QCS model~\eqref{eq:Uniform-dithered-quantization} as every \iid \rv $\rho_i := \cl Q((\bs \Phi \bs u)_i + \xi_i)-(\bs \Phi \bs u)_i$ is bounded and thus sub-Gaussian (see Sec.~5 in~\cite{XJ2018} for the proof). However, this cannot be directly generalized to the uniform LPD property (meaning that~\eqref{eq:LPD} would hold for all $\bs \rho$) without considering the geometry of $\Dmap$. In the case where $\Dmap \equiv \Amap$, we shall in particular control the impact of discontinuities introduced by $\cl Q$ on $\bs \rho$ to prove that the LPD holds under certain conditions (see Sec.~\ref{sec:lpd-A}).       

As detailed below, it is easy to understand why PBP can provide good signal estimate. We note first that a standard use of the polarization identity proves that if $\inv{\sqrt{m}}\bs \Phi$ satisfies the RIP$(\cl K-\cl K,\epsilon)$, then $\ts \inv{m}\,|\langle \bs
\Phi\bs u,\bs \Phi\bs v \rangle -\langle \bs u,\bs v\rangle|\leq 2\epsilon,~\forall \bs u, \bs v \in \cl K \cap \bb B^n$ (see, \eg~\cite{FR2013},~\cite[Lemma 3.5]{XJ2018}). Therefore, under the LPD$(\cl K, \bs \Phi, \nu)$ of $\Dmap$, the triangular identity provides
$$\ts \,|\inv{m} \langle \Dmap(\bs u), \bs \Phi \bs v\rangle- \langle
  \bs u,\bs v\rangle|\ \leq\ 2\epsilon + \nu,\quad \forall \bs u, \bs v \in \cl K \cap \bb B^n.
$$
Consequently, if $\bs y$ is the DCS observation of $\bs x \in \cl K$, maximizing $\scp{\bs v}{\tinv{m}\bs\Phi^\top \bs y}$ with some $\bs v \in \cl K$, as done somehow by $\cl P_{\cl K}$~in~\eqref{eq:PBP-estimate}, is a good proxy for maximizing the correlation of $\bs v$ with $\bs x$, \ie the optimal $\bs v$ is \st $\bs v \approx \bs x$. Here is a more rigorous explanation.

\begin{Thm}[PBP error on sparse signals]
\label{thm:PBP-sparse-vector}
Given two distortions $\epsilon, \nu > 0$, if $\tinv{\sqrt m}\bs \Phi \in \bb R^{m \times n}$ respects the RIP$(\spset{2k}, \epsilon)$ and if the mapping $\Dmap$ satisfies the LPD$(\spset{2k}, \bs \Phi, \nu)$, then, 
for all $\bs x \in \spset{k} \cap \bb B^n$, the estimate $\hat{\bs x}$ obtained by the PBP of $\bs y=\Dmap(\bs x)$ onto $\spset{k}$ satisfies 
\begin{equation}
  \label{eq:PBP-sparse-vector}
 \ts \|\bs x-\hat{\bs x}\| \leq 4\epsilon+2\nu,\quad \forall \bs x \in \cl K \cap \bb B^n.\vspace{-1mm}  
\end{equation}
If $\bs x$ is fixed, \eqref{eq:PBP-sparse-vector} holds if $\Dmap$ respects the L-LPD$(\spset{2k}, \bs \Phi, \bs x, \nu)$.
\end{Thm}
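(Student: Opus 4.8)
The plan is to combine the first-order optimality of the projection with the RIP/LPD inner-product estimate advertised just before the statement, while bookkeeping sparsity orders carefully. Write $\bs z := \tinv m \bs \Phi^\top \bs y = \tinv m \bs\Phi^\top \Dmap(\bs x)$ for the back projection, set $\bs h := \bs x - \hat{\bs x}$, and let $T := \supp(\bs x)\cup\supp(\hat{\bs x})$. Since $\hat{\bs x} = \cl P_{\spset k}(\bs z)$ is a closest point of $\spset k$ to $\bs z$ and $\bs x\in\spset k$, we have $\|\bs z - \hat{\bs x}\|\leq\|\bs z-\bs x\|$. Writing $\bs z-\hat{\bs x}=(\bs z-\bs x)+\bs h$, expanding the squares and cancelling $\|\bs z-\bs x\|^2$ on both sides immediately gives the core inequality $\|\bs h\|^2\leq 2\langle \bs x-\bs z,\,\bs h\rangle$.

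Next I would reduce the right-hand side to a single scalar-product bound. Assuming $\bs h\neq\bs 0$ (otherwise there is nothing to prove), put $\bs w := \bs h/\|\bs h\|$, so that $\langle \bs x-\bs z,\bs h\rangle=\|\bs h\|\,\langle\bs x-\bs z,\bs w\rangle$ and it suffices to prove $\langle\bs x-\bs z,\bs w\rangle\leq 2\epsilon+\nu$. Using $\langle\bs z,\bs w\rangle=\tinv m\langle\Dmap(\bs x),\bs\Phi\bs w\rangle$, I split
\[
\ts \langle\bs x-\bs z,\bs w\rangle = \big(\langle\bs x,\bs w\rangle - \tinv m\langle\bs\Phi\bs x,\bs\Phi\bs w\rangle\big) + \tinv m\big(\langle\bs\Phi\bs x,\bs\Phi\bs w\rangle-\langle\Dmap(\bs x),\bs\Phi\bs w\rangle\big),
\]
and bound the two terms separately. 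The second term is at most $\nu$ directly by the LPD$(\spset{2k},\bs\Phi,\nu)$ hypothesis, since $\bs x\in\spset k\subset\spset{2k}$ and $\bs w\in\spset{2k}\cap\bb B^n$; the first term is the RIP-to-inner-product gap.

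The delicate point — and the step I expect to be the main obstacle — is that the RIP is only assumed at order $2k$, so the bound $2\epsilon$ is \emph{not} directly available for an arbitrary pair in $\spset k\times\spset{2k}$, whose supports could union to size $3k$. The resolution is to notice that both $\bs x$ and $\bs w$ are supported inside $T$, and $|T|\leq 2k$ because $\bs x,\hat{\bs x}\in\spset k$. Restricting to the coordinate subspace indexed by $T$, both $\bs x+\bs w$ and $\bs x-\bs w$ lie in $\spset{2k}$, so the polarization argument quoted before the theorem (applied to the cone of vectors supported on $T$, using RIP$(\spset{2k},\epsilon)$) yields $|\langle\bs x,\bs w\rangle-\tinv m\langle\bs\Phi\bs x,\bs\Phi\bs w\rangle|\leq 2\epsilon$. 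Combining the two bounds gives $\langle\bs x-\bs z,\bs w\rangle\leq 2\epsilon+\nu$, hence $\|\bs h\|^2\leq 2\|\bs h\|(2\epsilon+\nu)$ and finally $\|\bs x-\hat{\bs x}\|\leq 4\epsilon+2\nu$. For the fixed-$\bs x$ claim, only the instance $\bs u=\bs x$ of the LPD is ever invoked, so L-LPD$(\spset{2k},\bs\Phi,\bs x,\nu)$ suffices throughout.
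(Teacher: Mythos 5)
Your proof is correct, and its second half---splitting $\langle\bs x,\bs w\rangle-\tinv{m}\langle\Dmap(\bs x),\bs\Phi\bs w\rangle$ into a RIP--polarization term bounded by $2\epsilon$ and an LPD term bounded by $\nu$, both legitimate because $\bs x$ and $\bs w$ lie in the common coordinate subspace of vectors supported on $T$ with $|T|\leq 2k$---is exactly the paper's finishing step (the paper takes a supremum over $\spset{T}\cap\bb B^n$; you only need the single direction $\bs w=\bs h/\|\bs h\|$, which sits in that same set). Where you genuinely diverge is the first half. The paper argues structurally: since $\cl P_{\spset{k}}$ is hard thresholding, $\hat{\bs x}$ is also the best $k$-term approximation of the \emph{restricted} back projection $\bs a_T$, whence $\|\bs a_T-\hat{\bs x}\|\leq\|\bs a_T-\bs x\|$ and, by the triangle inequality, $\|\bs x-\hat{\bs x}\|\leq 2\|\bs x-\bs a_T\|$; the factor $2$ comes from this triangle step, and the error is then expressed as a dual-norm supremum. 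You instead use only the defining minimality $\|\bs z-\hat{\bs x}\|\leq\|\bs z-\bs x\|$ of the projection and expand the square to get $\|\bs h\|^2\leq 2\langle\bs x-\bs z,\bs h\rangle$; your factor $2$ comes from this quadratic expansion. The two routes give the same constant $4\epsilon+2\nu$, but yours is arguably more portable: it never invokes the hard-thresholding structure of the projector, only its minimizing property, so it transfers verbatim to other SLC cones (e.g.\ the low-rank extension, where the paper must separately argue that $\hat{\bs X}$ is a best rank-$r$ approximation within an identified common subspace), at the mild cost of handling the degenerate case $\bs h=\bs 0$ and of identifying the single test direction rather than a full subspace supremum. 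Your flagged ``delicate point'' about the apparent need for order-$3k$ RIP, and its resolution via the common support $T$, is also precisely how the paper avoids the issue (its supremum is taken over $\spset{T}\cap\bb B^n$, not over all of $\spset{2k}\cap\bb B^n$), so no extra hypothesis is consumed.
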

\begin{proof}
Denote $T\subset [n]$ as the union of the supports of $\bs x$ and $\hat{\bs x}$, thus $|T|\leq 2k$, and let $\bs a :=\frac{1}{m}\bs \Phi^T\bs y$. Since $\hat{\bs x}= \cl P_{\cl K}(\bs a)$, $\hat{\bs x}$ is also the best $k$-term approximation of $\bs a_T$ zeroing all but the entries of $\bs a$ indexed in $T$. Therefore,
$\ts \|\bs x-\hat{\bs x}\|\leq\|\bs x-\bs a_T\|+\|\bs a_T-\hat{\bs x}\|\leq 2\|\bs x-\bs a_T\|$. Since $\bs \Phi$ and $\Dmap$ respect the RIP$(\spset{2k},\epsilon)$ and the LPD$(\spset{2k}, \bs \Phi,\nu)$, respectively, we have\vspace{-2mm}  
\begin{multline*}
\ts \|\bs x-\hat{\bs x}\| \leq2\|\bs x-\bs a_T\|=\ts 2\sup_{\bs w\in \bb B^n}\langle \bs w, \bs x-\bs a_T\rangle\\[-1mm]
\ts =2\sup_{\bs w\in \spset{T}\cap\bb B^n} [\langle \bs w, \bs x\rangle-\inv{m}\langle\bs \Phi\bs w,\Dmap(\bs x)\rangle]\leq 4\epsilon+2\nu,\\[-7mm]
\end{multline*}
where $\spset{T}$ is the set of vectors in $\bb R^n$ supported on $T$.
Moreover, if $\bs x$ is fixed, we clearly see that only the L-LPD$(\spset{2k}, \bs \Phi, \bs x, \nu)$ is required, which completes the proof.
\end{proof}

Up to a vectorization\footnote{$\bs x=\ve(\bs X)$ stacks all the columns of $\bs X$ on top of one another.} of the matrix domain $\bb R^{n_1\times n_2}$, \ie identifying this space with $\bb R^n$ (with $n=n_1n_2$) and allowing for the DCS observation of matrices in~\eqref{eq:distorted-CS-model}, we can proceed similarly to bound the reconstruction error of PBP in the estimation of low-rank matrices. The proof is similar to the one of Theorem~\ref{thm:PBP-sparse-vector} once we identify a common subspace for both the observed rank-$r$ matrix $\bs X\in \lrset{r}\cap \bb B^{n_1\times n_2}_F$ and its PBP estimate $\hat{\bs X}$, where $\cl K=\lrset{r}:=\{\bs Z\in \bb R^{n_1\times n_2}:~\text{rank}(\bs Z)\leq r\}$ and $\bb B^{n_1\times n_2}_F := \{\bs Z \in \bb R^{n_1\times n_2}: \|\bs Z\|_F := \|\!\ve(\bs Z)\| \leq 1\}$ is the Frobenius unit ball. As a result, the reconstruction error of PBP is bounded by $\|\bs X-\hat{\bs X}\|_F \leq 4\epsilon+2\nu$, provided that $\tinv{\sqrt m}\bs \Phi$ and $\Dmap$ respect the RIP$(\lrset{4r}, \epsilon)$ and the LPD$(\lrset{4r}, \bs\Phi,\nu)$, respectively (see.~\cite[Theorem 4.2]{XJ2018}). In Sec.~\ref{sec:exp}, we numerically validate the error distortions of PBP over both $\spset{k}$ and $\lrset{r}$.

\section{Error Decay Analysis of PBP}
\label{sec:lpd-A}

In this section, we establish how the reconstruction error of PBP decays when $m$ increases. This is done in the particular case where sparse signals or low-rank matrices are observed from the QCS model~\eqref{eq:Uniform-dithered-quantization} endowed with a random uniform dithering and a matrix $\bs \Phi$ generated from a RELS construction (see Sec.~\ref{sec:intro}). 

Since this study is supported by the general results of the previous section, we need first to determine when the quantized mapping $\Amap$ generated from a RIP matrix respects the LPD \whp on the drawn of the dithering.  We go thus beyond the L-LPD property, which trivially holds for the mapping $\Amap$ (see Sec.~\ref{sec:DCS}), by cautiously analyzing the interplay between the quantizer discontinuities and the dithering.   
\begin{Propn}[LPD for $\Amap$ over SLC set]
\label{prop:main-result-embed}
Given a SLC set $\cl K \subset
\bb R^n$, a distortion $0<\epsilon <1$, a quantization resolution $\delta > 0$, a matrix $\tinv{\sqrt m} \bs \Phi$ respecting the RIP$(\cl K-\cl K,\epsilon)$\footnote{$\cl K-\cl K$ denotes the Minkowski difference of $\cl K$ with itself.}, a random dithering $\bs \xi \sim \cl U^m([0,\delta])$, and provided
the random mapping $\Amap$ in~\eqref{eq:Uniform-dithered-quantization} respects 
the LPD$(\cl K, \bs \Phi, \epsilon(1+\delta))$
with probability exceeding $1 - C' \exp(- c' \epsilon^2 m)$.
\end{Propn}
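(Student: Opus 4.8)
The plan is to recast the LPD as a uniform bound on the empirical process $X(\bs u, \bs v) := \inv m \scp{\bs\rho(\bs u)}{\bs\Phi\bs v}$ indexed by $\bs u, \bs v \in \cl K\cap\bb B^n$, where $\bs\rho(\bs u) := \Amap(\bs u) - \bs\Phi\bs u = \cl Q(\bs\Phi\bs u + \bs\xi) - \bs\Phi\bs u$ is the dithered-quantization residual. Two elementary facts about a single coordinate drive everything. Writing $t = (\bs\Phi\bs u)_i$ and $\xi_i \sim \cl U([0,\delta])$, a direct computation gives $\Ex_{\xi_i}[\cl Q(t+\xi_i)] = t$, so each $\rho_i(\bs u)$ is \emph{zero-mean} over the dithering; and $\cl Q(t+\xi_i)\in(t+\xi_i-\delta, t+\xi_i]$ forces $|\rho_i(\bs u)| < \delta$, so each residual is bounded, hence sub-Gaussian. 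For $\bs u, \bs v$ fixed, $X(\bs u, \bs v)$ is therefore a sum of $m$ independent, centered, $\delta$-bounded terms weighted by $(\bs\Phi\bs v)_i$, and Hoeffding's inequality together with $\|\bs\Phi\bs v\|^2 \leq m(1+\epsilon)$ (from the RIP on $\cl K\subset\cl K-\cl K$, since $\bs 0\in\cl K$) yields $\Pr[|X(\bs u,\bs v)|>t]\leq 2\exp(-c\, m t^2/\delta^2)$. This is the pointwise version; note it already recovers the L-LPD claim for fixed $\bs u$ mentioned in Sec.~\ref{sec:DCS}, so the new content is the uniformity in $\bs u$.

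Next I would upgrade this to a uniform bound in the two directions separately. The $\bs v$-direction is benign: $\bs v\mapsto X(\bs u,\bs v)$ is linear, with $\inv m|\scp{\bs\rho(\bs u)}{\bs\Phi(\bs v - \bs v')}| \leq \delta\sqrt{1+\epsilon}\,\|\bs v - \bs v'\|$ using $\|\bs\rho(\bs u)\|\leq\delta\sqrt m$ and the RIP norm bound, so an $\eta$-net $\cl G_v$ of $\cl K\cap\bb B^n$ (of log-cardinality $\leq \setcst{\cl K}\log(1+1/\eta)$ by the Kolmogorov-entropy assumption~\eqref{eq:bound-kolm-ent}) reduces the $\bs v$-supremum to the net up to an additive $\delta\sqrt{1+\epsilon}\,\eta$. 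Building an analogous net $\cl G_u$ for $\bs u$ and taking a union bound over all pairs in $\cl G_u\times\cl G_v$ controls $X$ on the nets as soon as $m\,t^2/\delta^2\gtrsim\setcst{\cl K}\log(1/\eta)$, i.e. with $t\asymp\epsilon\delta$ as soon as $m\gtrsim\epsilon^{-2}\setcst{\cl K}\log(1/\eta)$, which is exactly the RELS scaling up to the logarithmic factor.

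The crux — and the step I expect to be the main obstacle — is the off-net control in the $\bs u$-direction, because $\bs u\mapsto\bs\rho(\bs u)$ is \emph{discontinuous}: as $\bs u$ leaves a net point $\bs u_0$, $\cl Q$ may jump by $\pm\delta$ on some coordinates. I would decompose $X(\bs u,\bs v) - X(\bs u_0,\bs v)$ into a continuous part $-\inv m\scp{\bs\Phi(\bs u-\bs u_0)}{\bs\Phi\bs v}$, bounded by $(1+\epsilon)\eta$ through Cauchy--Schwarz and the RIP norm bounds on $\cl K-\cl K$, and a jump part $\inv m\sum_{i\in B}\pm\delta\,(\bs\Phi\bs v)_i$ supported on the set $B$ of coordinates $i$ where a multiple of $\delta$ lies between $(\bs\Phi\bs u_0)_i+\xi_i$ and $(\bs\Phi\bs u)_i+\xi_i$. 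Cauchy--Schwarz bounds the jump part by $\delta\sqrt{2|B|/m}$, so it suffices to show $|B|\lesssim\epsilon^2 m$ uniformly. Here the dithering pays off a second time: for each coordinate the crossing probability over $\xi_i$ is at most $|(\bs\Phi(\bs u-\bs u_0))_i|/\delta$ (the displacement measured in quantization cells), whence $\Ex_{\bs\xi}|B|\leq\|\bs\Phi(\bs u-\bs u_0)\|_1/\delta\leq\sqrt m\,\|\bs\Phi(\bs u-\bs u_0)\|/\delta\lesssim m\eta/\delta$, which is $\leq c\epsilon^2 m$ once $\eta\lesssim\epsilon^2\delta$.

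The genuinely delicate point is making this count uniform over the \emph{continuum} $\bs u\in\cl K\cap(\bs u_0+\eta\bb B^n)$ and over all net centers. My approach is to replace each crossing indicator by its supremum over the ball, $\mathbf{1}[\operatorname{dist}((\bs\Phi\bs u_0)_i+\xi_i,\,\delta\Zbb)\leq g_i]$ with $g_i := \sup\{|(\bs\Phi\bs w)_i| : \bs w\in(\cl K-\cl K),\ \|\bs w\|\leq\eta\}$, reducing $\sup_{\bs u}|B|$ to a sum of independent indicators whose expectation is $\leq (2/\delta)\sum_i g_i$, and then to apply a Chernoff/Bernstein bound for its concentration. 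Controlling $\sum_i g_i$ (a supremum sitting inside the sum) is the real work, and I would handle it via the RELS structure and the entropy of $\cl K-\cl K$, possibly with a secondary net on $\bs u$ or a chaining argument over the increment process. Finally, choosing $\eta\asymp\epsilon^2\delta$ and $t\asymp\epsilon\delta$ and collecting the continuous $O(\eta)$, RIP $O(\epsilon)$, and jump $O(\epsilon\delta)$ contributions yields the total distortion $\epsilon(1+\delta)$, while the union bounds over the two nets and the crossing-count concentration all absorb into the stated failure probability $C'\exp(-c'\epsilon^2 m)$, provided $m$ meets the RELS requirement.
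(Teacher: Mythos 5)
Your architecture is the same as the paper's: the zero-mean property $\Ex_{\xi_i}\cl Q(t+\xi_i)=t$ of the dithered residual, its boundedness (hence sub-Gaussianity) giving a pointwise Hoeffding bound together with the RIP bound $\|\bs\Phi\bs v\|\leq\sqrt{2m}$, covering nets in both variables with the entropy bound~\eqref{eq:bound-kolm-ent}, a continuity/linearity extension in $\bs v$, and, for the discontinuous $\bs u$-direction, a split of the increment into a continuous RIP-controlled part plus a jump part supported on the crossing set $B$, bounded crudely so that everything reduces to showing $|B|\lesssim\epsilon^2 m$ uniformly over each covering ball. This is precisely the proof sketched in Sec.~\ref{sec:lpd-A} and detailed in~\cite{XJ2018}.

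The genuine gap is at the step you yourself call ``the real work,'' and your proposed route there is the one that does not go through. Replacing each crossing indicator by $\mathbf 1[\operatorname{dist}((\bs\Phi\bs u_0)_i+\xi_i,\delta\Zbb)\leq g_i]$ with $g_i:=\sup_{\bs w}|(\bs\Phi\bs w)_i|$ puts a supremum \emph{inside} the coordinate sum, and $\sum_i g_i$ is not a quantity the hypotheses control: the RIP bounds $\sup_{\bs w}\sum_i|(\bs\Phi\bs w)_i|^2$, i.e.\ sup-outside-sum, which can be far smaller than $\sum_i\sup_{\bs w}$, and neither a secondary net nor chaining repairs this for a general RELS matrix, since nothing in the assumptions gives per-row or per-coordinate control of $\bs\Phi$ (for $\spset{2k}$ one could fall back on $g_i\leq\eta\|\bs\Phi_{i,:}\|$ and $\|\bs\Phi\|_F^2\leq(1+\epsilon)mn$, at the price of $\log n$ factors, but your sketch does not say this). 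The paper's proof avoids the sup-inside-sum entirely with a fixed-threshold split: pick a deterministic $t$ and observe that a coordinate $i$ can be discontinuous somewhere over the ball $\bs u_0+\eta\bb B^n$ only if either $\operatorname{dist}((\bs\Phi\bs u_0)_i+\xi_i,\delta\Zbb)\leq t$ --- an event \emph{independent of $\bs u$}, whose count is a sum of \iid Bernoulli variables of mean $2t/\delta$, controlled by Chernoff --- or $|(\bs\Phi(\bs u-\bs u_0))_i|>t$, of which there are at most $2m\eta^2/t^2$ coordinates by Markov applied to $\|\bs\Phi(\bs u-\bs u_0)\|^2\leq 2m\eta^2$, again uniformly in $\bs u$. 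Choosing $t\asymp\delta\epsilon^2$ and $\eta\asymp\epsilon^3$ (this extra factor of $\epsilon$ is exactly why the paper takes $\eta=c\epsilon^3$ rather than your $\eta\asymp\epsilon^2\delta$) makes both counts $O(\epsilon^2 m)$, which is what your jump bound $\delta\sqrt{2|B|/m}$ requires; with this one device substituted for your $\sum_i g_i$ argument, the rest of your proof is the paper's.
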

The full proof of this proposition is given in~\cite[Sec.~6]{XJ2018}. We provide here an intuitive proof sketch pruned of too technical considerations. First, for a fixed pair of vectors $\bs \Phi\bs u,\bs \Phi\bs v\in \bb R^n$, notice that $\bb E_d \lfloor \lambda + d\rfloor = \lambda$, for $\lambda \in \bb R$ and $d\sim \cl U([0,1])$, induces $\bb E_{\bs \xi} \langle \Amap(\bs u), \bs \Phi \bs v\rangle-\langle \bs \Phi\bs u,\bs \Phi\bs v\rangle = 0$ (see~\cite[Lem. A.1]{XJ2018}). Since, asymptotically in $m$, $\langle \Amap(\bs u), \bs \Phi \bs v\rangle$ approaches $\bb E_{\bs \xi} \langle \Amap(\bs u), \bs \Phi \bs v\rangle$, $\langle \Amap(\bs u) - \bs \Phi \bs u, \bs \Phi \bs v\rangle$ should thus tend to 0, as targeted by the LPD. In fact, using measure concentration tools on the sub-Gaussianity of $\Amap(\bs u) - \bs \Phi \bs u$, we can show that, with probability exceeding $1-2\exp(-2\epsilon^2m)$, $R(\bs u, \bs v):=|\langle \Amap(\bs u) - \bs \Phi \bs u, \bs \Phi \bs v\rangle| \leq \delta\epsilon \sqrt{m}\|\bs \Phi\bs v\|$ (see~\cite[Lem. 6.3]{XJ2018}). Moreover, since $\cl K$ is a SLC set and $\bs 0\in \cl K$, the RIP$(\cl K-\cl K,\epsilon)$ defined in~\eqref{eq:RIP-K-B} involves that $\|\bs \Phi\bs v\|\leq \sqrt{m(1+\epsilon)} \|\cl K\cap \bb B^n\|\leq\sqrt{2m}$, so that $R(\bs u, \bs v) \leq \sqrt 2 \delta \epsilon m$ with the same probability.

Second, we must bound $R(\bs u, \bs v)$ for all vectors $\bs u, \bs v \in \cl K \cap \bb B^n$. In the case where $\bs u$ is fixed, we can bound $R$ for all $\bs v \in \cl K \cap \bb B^n$ by a standard covering-and-continuity argument~\cite{BDDW08}.  In other words, if $\cl K_\eta \subset \cl K \cap \bb B^n$ is an optimal $\eta$-covering of $\cl K \cap \bb B^n$, \ie $\cl K \subset \cup_{\bs q \in \cl K_\eta} \{\bs q + \eta \bb B^n\}$ with $\log |K_\eta| = \cl H(\cl K, \eta)$ (with $\cl H$ the Kolmogorov entropy introduced in Sec.~\ref{sec:intro}), then a union bound provides that, for all $\bs q \in \cl K_\eta$, $R(\bs u, \bs q) \leq \sqrt 2 \delta \epsilon m$ with probability exceeding $1-2\exp(\cl H(\cl K \cap \bb B^n, \eta) -2\epsilon^2m)$. Since any $\bs v \in \cl K \cap \bb B^n$ is associated to an $\eta$-close element of $\cl K_\eta$, this last result can basically be extended with the same probability to all $\bs v \in \cl K \cap \bb B^n$ from the continuity of the scalar product, and by adequately connecting $\eta$ to~$\epsilon$. 

However, a similar treatment cannot be applied for an extension to all $\bs u \in \cl K \cap \bb B^n$ since the quantizer discontinuities in $\Amap$ prevent directly using the same continuity argument.
We can fortunately overcome this issue by showing that, for all $\cl V$ picked in the covering neighborhoods $\bb V := \{\bs q + \eta \bb B^n: \bs q \in \cl K_\eta\}$, the number of components of $\Amap$ being discontinuous over $\cl V$ constitutes, \whp, only a small fraction of $m$. Therefore, $R(\bs u, \bs v)$ can be bounded for all $\bs u \in \cl K \cap \bb B^n$ by: \emph{(i)} bounding it, by union bound, over all elements of $\cl K_\eta$, and \emph{(ii)}, for all $\bs u \in \cl K \cap \bb B^n$, splitting the separable scalar product in $R$ into two parts, one composed of all continuous components of $\Amap$ over the neighborhood of $\bb V$ containing $\bs u$, and which can then be bounded by continuity, and the other composed of a minority of discontinuous components bounded by using the crude deterministic bound $|(\Amap(\bs u) - \bs \Phi \bs u)_i| \leq 2\delta$. Gathering all these bounds, and adjusting $\eta$ to $\epsilon$ (\ie $\eta = c\epsilon^3$), then provides~\eqref{eq:LPD} with $\nu=\epsilon(1+\delta)$, and completes the proof.\\[-3mm]

We can now focus on the main result of this section, \ie determining the reconstruction error decay of PBP for the estimation of signals and matrices in $\spset{k}\cap \bb B^n$ or $\lrset{r}\cap \bb B_F^{n_1\times n_2}$, respectively, when they are observed from~\eqref{eq:Uniform-dithered-quantization}. We assume $\Amap$ endowed with a random uniform dithering, and $\tinv{\sqrt m}\bs \Phi$ generated from a RELS construction. Below, guided by the requirements of Thm.~\ref{thm:PBP-sparse-vector} and its extension to low-rank matrix estimation, the SLC set $\cl K'$ denotes either the set $\spset{2k}$ in the case of $k$-sparse signal estimation, or the set $\lrset{4r}$ for rank-$r$ matrix estimation.  

We follow the recommendations given in Sec.~\ref{sec:DCS} and the requirements imposed by Prop.~\ref{prop:main-result-embed}.  By the definition of RELS (Sec.~\ref{sec:intro}), if $m \geq \epsilon^{-2} \setcst{\cl K'}\cl P_{\log}(m, n, 1/\zeta)$, $\frac{1}{\sqrt{m}} \bs \Phi$ respects the RIP$(\cl K',\epsilon)$ with probability exceeding $1-\zeta$. Moreover, since $\cl K'$ is a SLC set whose Kolmogorov entropy is bounded as in~\eqref{eq:bound-kolm-ent}, the requirement on $m$ in Prop.~\ref{prop:main-result-embed} holds if $m \geq C \epsilon^{-2} \setcst{\cl K'} \log(1 + c\epsilon^{-3})$. Under this condition, the considered quantized mapping $\Amap$ satisfies thus the LPD$(\cl K', \bs \Phi, \epsilon(1+\delta))$ with probability exceeding $1-C\exp(-c\epsilon^2 m)$. 

Hence, by union bound over the events above, a few manipulations show that provided\vspace{-1mm} 
\begin{equation}
\label{eq:m-lpd-rip}
\ts m\geq \epsilon^{-2} \setcst{\cl K'}\,\cl P_{\log}(m, n, 1/\zeta, 1/\epsilon^3),\vspace{-1mm}
\end{equation}
the LPD$(\cl K', \bs \Phi,\epsilon(1+\delta))$ and the RIP$(\cl K',\epsilon)$ properties of $\tinv{\sqrt m}\bs \Phi$ and $\Amap$, respectively, both hold with probability exceeding $1-2\zeta$.  This finally guarantees $\|\bs x-\hat{\bs x}\|\leq C \epsilon(1+\delta)$ for all $\bs x \in \spset{k}\cap \bb B^n$, and equivalently for $\lrset{r}\cap \bb B_F^{n_1\times n_2}$ up to a vectorization.  

Equivalently, saturating the condition on $m$ in~\eqref{eq:m-lpd-rip} and inverting this relation with respect to $\epsilon$ provides $\epsilon = O((\setcst{\cl K'}/m)^{1/2})$, up to missing log factors. We can finally conclude this section and say that, within the precise context described above, uniformly or non-uniformly over the generation of $\bs \xi$, PBP provides, \whp, sparse signal or low-rank matrix estimates whose reconstruction error decays like
\begin{equation}
  \label{eq:error-decay-lsre-uls-low-rank}
\ts \|\bs x - \hat{\bs x}\| = O\big(\frac{1 + \delta}{\sqrt m}\, \setcst{\cl K'}^{-1/2}\big),
\end{equation}
when $m$ increases (up to missing log factors).

\section{Experiment results}
\label{sec:exp}

Let us now illustrate the evolution of the PBP reconstruction error when $m$ or $\delta$ increases. We do this for ``signals'' of $\spset{k}$ and $\lrset{r}$, for Gaussian and partial DCT random matrices (with DCT rows sampled without replacement), with and without dithering, and by carefully selecting our figures to avoid duplicated messages. 

\begin{figure}[htbp] 
   \begin{minipage}[b]{.48\textwidth}
        \centering   
\subfloat[\label{fig:gauss-sensing-sparse}4-sparse signals]
%{\input{figs/GaussDithtag.tex}}
{\includegraphics{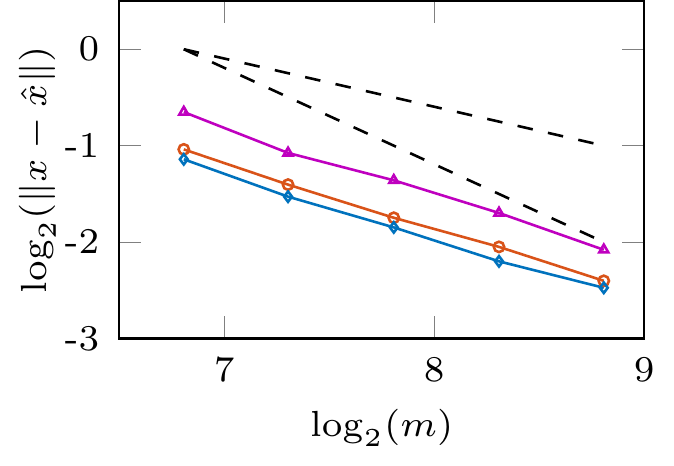}}
    \end{minipage}
    \hspace{0.5mm}
    \begin{minipage}[b]{.48\textwidth}
        \centering

 \subfloat[\label{fig:DCT-sensing-LR2}  rank-2 matrices]  
{\includegraphics{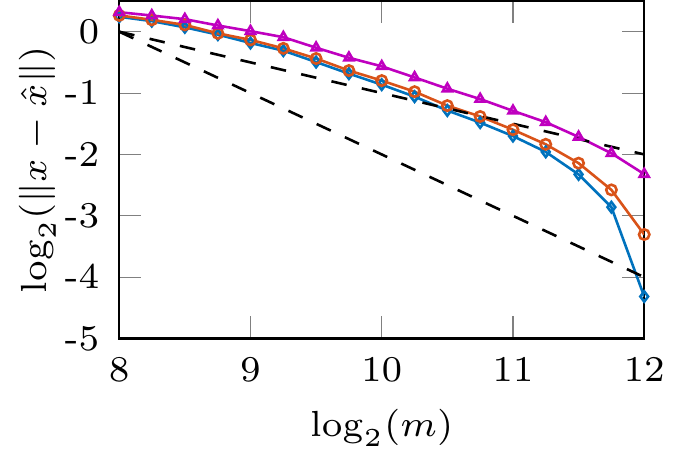}}
% {\input{figs/DCTLR2m64.tex}}
    \end{minipage}\vspace{-2mm}
 \caption{PBP reconstruction error evolution with $m$ (log-log plot), for $\delta = 0.5$ (blue diamonds), $\delta=1$
 (orange circles) and $\delta=2$ (pink triangles). Dashed lines indicate the rates $m^{-1/2}$ and $m^{-1}$.\vspace{-4mm}
}
  \label{fig:various-mtx-sensing}
 \end{figure}

\myparagraph{A.~Performances for two low-complexity sets:}
This experiment tests the relationship between the PBP reconstruction error of low-complexity signals and the number of measurements, for different quantization resolution $\delta$, where $\bs \Phi$ is either a Gaussian random matrix with elements drawn \iid from the standard normal distribution, or a partial DCT random matrix obtained by picking $m$ rows uniformly at random from an $n\times n$ orthonormal DCT matrix. 
 
For $\spset{k}$, we choose $n=512$, $k=4$ and\footnote{We expect $m=4k\log n/k$ (unquantized) linear observations suffice to reconstruct $k$-sparse signals.} $m \in [4k\log n/k, n]$. The signal $\bs x\in \spset{k}$ is obtained by picking one support uniformly at random amongst $n \choose k$ number of $k$-length supports of $[n]$, then drawing every $x_i$ in the support \iid from a standard normal distribution. Fig.~\ref{fig:gauss-sensing-sparse} shows the reconstruction error of
PBP of $4$-sparse signals as a function of $m$ for $\delta \in \{0.5,1,2\}$ displayed by three curves.  For every $\delta$ and $m$, the PBP reconstruction was tested over 100 trials of a random generation of $\bs \Phi$, $\bs \xi$ and $\bs x$.
We observe a reconstruction error decay rate slightly faster than $O(m^{-1/2})$ (\eg the curve at $\delta=1$ is well fitted by $O(m^{-0.67})$), as predicted by~\eqref{eq:error-decay-lsre-uls-low-rank}. 

Duplicating the experiment for $\lrset{r}$ with $n_1 = n_2 = 64$, $n=n_1n_2=4096$ and $r=2$, and inserting a partial DCT random matrix in $\Amap$, we can also show that the PBP reconstruction error decays as $m$ increases. Each rank-2 matrix was generated as $\bs X = c\bs B \bs C^\top$ with random matrices $\bs B, \bs C \in \bb R^{\sqrt n \times 2}$ having standard normal \iid entries, while $c>0$ ensures that $\|\bs X\|_F = 1$. The sensing matrix $\tinv{\sqrt m}\bs\Phi$ is a partial DCT random matrix operating over the vectorized form $\bs x = \ve(\bs X)$.  Fig.~\ref{fig:DCT-sensing-LR2} shows the decay of the reconstruction error of the PBP estimate $\hat{\bs x} = \ve(\hat{\bs X})$ when $m \in [n/16, n]$ increases (in a log-log plot) for $\delta \in \{0.5, 1, 2\}$ and with an average over 50 trials for each curve points (over the generation of $\bs \Phi$, $\bs \xi$ and $\bs X$). Specifically, as $m$ increases, the rate of the reconstruction error decay of PBP is faster than $O(m^{-1/2})$ for partial DCT random matrices. Another experiment, not presented here, over signals of $\spset{4}$ and with partial DCT random matrices also results in similar error decay.
 \begin{figure}[htbp] 
   \begin{minipage}[b]{.48\textwidth}
        \centering
\subfloat[\label{fig:DCT-sensing-sparse-nodith} 4-sparse signals]
%{\input{figs/DCTNoDithSparse.tex}}
{\includegraphics{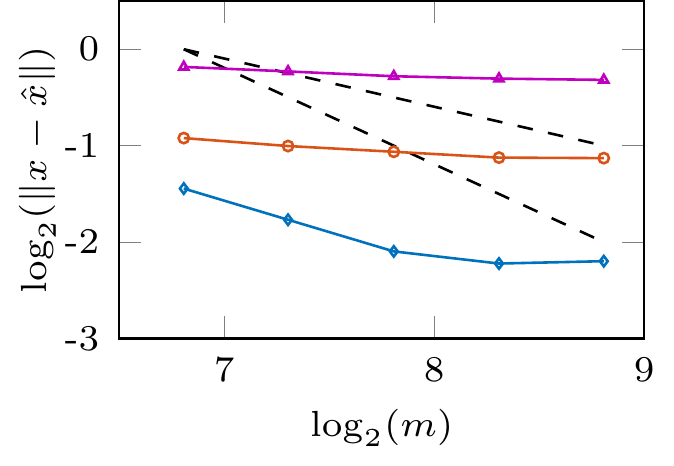}}
    \end{minipage}
    \hspace{1mm}
    \begin{minipage}[b]{.48\textwidth}
       \centering
 \subfloat[\label{fig:evol-gauss-sensing-delta-LR2}  rank-2 matrices ]
% {\input{figs/GaussLRDithDelta.tex}}
{\includegraphics{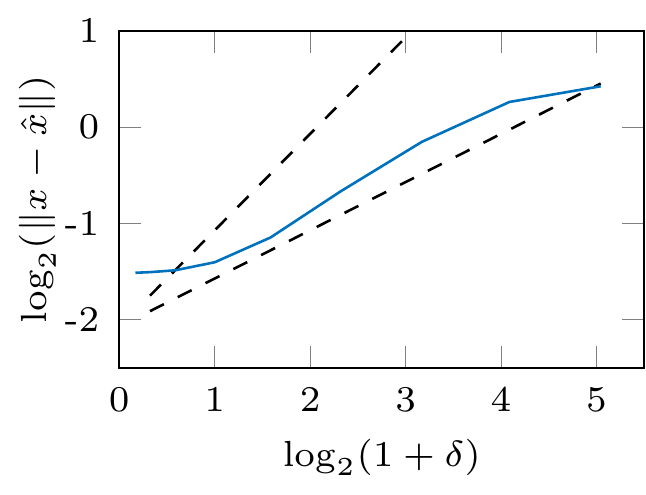}}
     \end{minipage}
 \caption{(a) PBP reconstruction error evolution with $m$ (log-log plot) for QCS observations (without dithering). (b) PBP reconstruction error evolution with~$\delta$ (log-log plot) from the QCS observations of matrices in $\lrset{2}$. Dashed lines indicate the rates $1/2\log_2(1+\delta)$ and $\log_2(1+\delta)$.\vspace{-4mm}}
  \label{fig:evol-gauss-sensing-delta}
  \end{figure}

\myparagraph{B.~Impact of the dithering:} We now generate signals in $\spset{4}$ as in the first experiment. These are then observed by configuring $\Amap$ with a partial DCT random matrix and a canceled dithering. Fig.~\ref{fig:DCT-sensing-sparse-nodith} demonstrates that the decay of the PBP reconstruction error reaches a constant floor when $m$ increases, especially at $\delta=2$. A similar phenomenon, not reported here, is also observed for the reconstruction error of rank-$2$ matrices. This confirms the positive impact of the dithering in the quantization, \ie it accelerates the decay rate of the reconstruction error of PBP. 

\myparagraph{C.~Impact of the quantization resolution:}
We finally evaluate the PBP reconstruction error on $\lrset{2}$ as a function of $\delta$ and for Gaussian random matrix. We set $\log_2 \delta \in [-3, 5]$ and kept $m=n/2$ fixed. In Fig.~\ref{fig:evol-gauss-sensing-delta-LR2}, we observe that the error curve is compatible with the theoretical (upper) bound $\|\bs x -\hat{\bs x}\| = O(1+\delta)$. The decay seems actually closer to $C \sqrt{1+\delta}$, which could be induced by the Gaussianity of the sensing. At small value of $\delta$, the error saturates to a floor, \ie the quantizer $\cl Q$ reduces to the identity operator when $\delta$ tends to zero. Repeating the experiment for $\spset{4}$, a similar error decay rate is observed when $\delta$ decreases (not reported here).
 
\section{Conclusion}
\label{sec:con}

Our work has demonstrated the existence of (at least) one reconstruction method, the \textit{projected back projection} (PBP), that reconciles RIP random matrices with the specific QCS model~\eqref{eq:Uniform-dithered-quantization} induced by a uniform scalar quantization. Critically, this reconciliation is possible from the addition of a uniform random dithering before quantizing the linear signal observations. Thanks to it, PBP is proved to achieve accurate estimations of signals belonging to SLC sets (\eg $\spset{k}$ and $\lrset{r}$), and this is confirmed numerically. Moreover, in the absence of dithering, we have also isolated numerical examples where the reconstruction performance saturates. Our numerical tests also confirm a general decay rate in $O(1/\sqrt{m})$ for the PBP reconstruction error of the considered signals as $m$ increases, up to missing factors.  As future works, we plan to extend this PBP study to other reconstruction algorithms, \eg using the PBP estimate as an initialization~\cite{JDV2015}. In particular, consistency between the signal estimate and the observed signal could accelerate the error decay, reaching the theoretic rate of $O(1/m)$ established in 1-bit CS and in QCS with Gaussian random matrix~\cite{JLBB2013,J2015}.

\footnotesize

\end{document}